\newtheorem{theorem}{Theorem}
\newtheorem{remark}{Remark}
\newtheorem{conj}[theorem]{Conjecture}
\newtheorem{prop}[theorem]{Property}
\begin{document}

\title{Looking more closely to the Rabinovich-Fabrikant system}

\author{MARIUS-F. DANCA \\Dept. of Mathematics and Computer Science, \\Avram Iancu University of Cluj-Napoca, Romania\\and\\
Romanian Institute of Science and Technology, Cluj-Napoca, Romania\\danca@rist.ro\\
MICHAL FE\u{C}KAN \\Department of Mathematical Analysis and Numerical Mathematics\\Comenius University in Bratislava, Slovakia\\and\\
Mathematical Institute Slovak Academy of Sciences\\
Bratislava, Slovakia\\Michal.Feckan@fmph.uniba.sk\\
NIKOLAY KUZNETSOV \\Department of Applied Cybernetics\\Saint-Petersburg State University, Russia\\and\\
  University of Jyv\"{a}skyl\"{a}, Finland\\
  nkuznetsov239@gmail.com\\
GUANRONG CHEN \\Department of Electronic Engineering, \\City University of Hong Kong,\\ Hong Kong SAR, China\\eegchen@cityu.edu.hk}
\maketitle

\begin{abstract}Recently, we look more closely into the Rabinovich-Fabrikant system, after a decade of the study in \cite{danca1}, discovering some new characteristics such as cycling chaos, transient chaos, chaotic hidden attractors and a new kind of saddles-like attractor. In addition to extensive and accurate numerical analysis, on the assumptive existence of heteroclinic orbits, we provide a few of their approximations.
\end{abstract}

\emph{Rabinovich-Fabrikant system; cycling chaos; transient chaos; heteroclinic orbit; LIL numerical method}

\section{Introduction}

Rabinovich \& Fabrikant  [1979] introduced and analyzed from physical point of view a model describing the stochasticity arising from the modulation instability in a non-equilibrium dissipative medium. This is a simplification of a complex nonlinear parabolic equation modelling different physical systems, such as the Tollmien-Schlichting waves in hydrodynamic flows, wind waves on water, concentration waves during chemical reactions in a medium where diffusion occur, Langmuir waves in plasma, etc.

The mathematical model of Rabinovich \& Fabrikant [1979] is described by the following equations (named the RF system):

\begin{equation}
\label{rf}
\begin{array}{l}
\overset{.}{x}_{1}=x_{2}\left( x_{3}-1+x_{1}^{2}\right) +ax_{1}, \\
\overset{.}{x}_{2}=x_{1}\left( 3x_{3}+1-x_{1}^{2}\right) +ax_{2}, \\
\overset{.}{x}_{3}=-2x_{3}\left( b+x_{1}x_{2}\right),
\end{array}%
\end{equation}

\noindent where the two constant parameters $a,b>0$. For $a<b$, the system is dissipative:

\[
\textrm{div}(f(x))=\sum_{i=1}^{3}\frac{\partial }{\partial x_{i}}f_{i}(x)=2(a-b)<0.
\]

Roughly speaking, there are at least three reasons to reconsider this  RF system: one is the fact that the system models a physical system and, therefore, it is not an artificial model; another reason is the fact that, due to the strong nonlinearity, a rigorous mathematical analysis cannot be be performed on it, hence the system might reveal some new interesting characteristics; and, finally, it poses some real challenges to numerical methods for ODEs \cite{danca3}.

Compared with the numerical investigation reported in \cite{raba}, the studies in \cite{danca1} (and also in \cite{danca2}), revealed some new interesting aspects. Since then, the interest in this system, has continuously increased, partially following the direction of \cite{danca1} (see \cite{cit2,cit3,cit4,cit5,cit6,cit7,cit8,cit9}, utilizing some computer software such as Matlab Central \cite{cit10}  or Wolfram \cite{cit11}).

The system behavior depends sensibly on the parameter $b$ but not so much on $a$. As can be seen from the bifurcation diagram of the phase variable $x_3$ presented in Fig. \ref{bifurcation} (a), especially the zoomed detail $D$ (Fig. \ref{bifurcation} (b)), there are some hidden cascades of period doubling bifurcations, which do not appear using the aforementioned numerical software is used in a standard resolution.
With a single exception, we fix $a = 0.1$ and let $b$ be the bifurcation parameter. It is worth mentioning that even without some physical meaning, few interesting cases have been found with negative values of $a$ and $b$.

Because of the coexistence of chaotic attractors and stable cycles, and also because of the seemingly complicated attraction basins, obtaining simultaneously in the bifurcation diagram the evolutions of both stable and unstable equilibria is obviously a difficult task.

For some values of $b$, the results depend drastically on the step-size, the initial conditions, and the numerical methods used. So, the available efficient numerical methods for ODEs, implemented in different software packages, might give unexpectedly different results for the same parameter values and initial conditions, while fixed-step-size schemes (such as the standard Runge-Kutta method RK4, or the predictor-corrector LIL method (see Appendix and \cite{danca3}, utilized in this paper) generally give more accurate results, although in some cases these are strongly dependent on the step-sizes.

On the other hand, the attraction basins have an extremely complicated (fractal) boundaries, since for some given values of $b$ one can obtain several different attractors depending on the locations of the initial conditions, even if they suffer infinitesimally.

Because a complete mathematical analysis including the stability of the equilibria, the existence of invariant sets, the existence and convergence of heteroclinic or homoclinic orbits are impossible at this time, most investigations in the literature are based on numerical and computer-graphic analyses.

Following this common practice, this paper takes a numerical analysis-based approach, and in fact the numerical results in this paper are obtained by tedious trial-and-error.

For strongly chaotic systems, like the RF system, initial deviations from a true orbit can be magnified at a large exponential rate, making direct computational methods fail quickly \cite{jiz}. This feature could be responsible for some spectacular and interesting simulation results obtained for example by varying the step-size of the numerical method, or just by changing the numerical routine.

The  following numerical integrations and computer simulations have were obtained generally with the step-size $h=0.00005-0.0001$, while the integration time interval was $I=[0,T_{max}]$ with $T_{max}=300-500$. The initial conditions $S=(x_{0,1},x_{0,2},x_{0,3})$, have of a major impact on the numerical results, were chosen generally as follows: $x_{0,1}, x_{0,2}\in (-1,1)$, but mostly $x_{0,1}=-x_{0,2}=\pm0.05$, and $x_{03}=0.3$. Larger values for initial conditions could lead to system instability.

Except for some singular cases, all simulations were performed for $a=0.1$ and $b\in(b_{min},b_{max})$, with $b_{min}=0.13$ and $b_{max}=2$. However, it is noted that some interesting results were found also for $b\notin(b_{min} , b_{max} )$ and $a\neq 0.1$.

\vspace{3mm}
\emph{Some cases, are marked with ``*'' when some expected attractor presents an extreme dependence on initial conditions, on used numerical method, or step-size, and simulations are difficult.
}
\vspace{3mm}

A representative stable cycle, for $a=0.1$ and $b=1.035$, and a chaotic attractor, for $a=-1$ and $b=-0.1$, are presented in Fig. \ref{spictac}. The colored tubular representation shows the speeds along the attractors (red color and longer cylinders indicate the higher speeds).

In this paper, we are interested in the system saddles, their (inter)connections, chaotic behavior, coexisting attractors, cycling chaos, transient chaos, hidden attractors and on the virtual saddles-like attractor which are found numerically for the first time from the RF system.

The rest of the paper is organized as follows: Section 2 deals discusses the system equilibria, Section 3 presents numerical approximations of heteroclinic orbits connecting equilibria with stable cycles and chaotic attractors, and also the case of connecting two distinct chaotic attractors. Section 4  investigates different kind of attractors, such as coexisting chaotic attractors and stable equilibria, transient chaos and hidden attractors. Conclusion is summarized in the last section of the paper.

\section{System equilibria}

The system is equivariant with respect to the following symmetry:
\begin{equation}\label{sim}
T(x_1,x_2,x_3)\rightarrow(-x_1,x_2,x_3),
\end{equation}

This symmetry means that any orbit, not invariant under $T$, has its symmetrical (‘‘twin’’) orbit under this transformation $T$, namely, all orbits are symmetric one to another with respect to the $x_3$-axis. This symmetry persists for all values of the system parameters, which is also reflected in the expressions of the five equilibrium: $X_0^*(0,0,0)$ and the other four points:
\begin{equation*}
\begin{array}{l}
X_{1,2}^{\ast }\left( \mp \sqrt{\dfrac{bR_1+2b}{4b-3a}},\pm \sqrt{b\dfrac{4b-3a}{R_1+2}},\dfrac{aR_1+R_2}{\left(4b-3a\right) R_1+8b-6a}\right),
\\
X_{3,4}^{\ast }\left( \mp \sqrt{\dfrac{bR_1-2b}{3a-4b}},\pm \sqrt{b\dfrac{4b-3a}{2-R_1}},\dfrac{aR_1-R_2}{\left(
4b-3a\right) R_1-8b+6a}\right) ,
\end{array}%
\end{equation*}

\noindent where $R_1=\sqrt{3a^{2}-4ab+4}$ and $R_2=4ab^{2}-7a^{2}b+3a^{3}+2a$. In \cite{raba}, and also in \cite{danca1}, the equilibria are obtained by hand-computing, while in this paper, for computational reasons we use symbolic solvers. The Jacobian matrix is
\[
J=\left(
\begin{array}{ccc}
2x_{1}x_{2}+a & x_{1}^{2}+x_{3}-1 & x_{2} \\
-3x_{1}^{2}+3x_{3}+1 & a & 3x_{1} \\
-2x_{1}x_{3} & -2x_{1}x_{3} & -2\left( x_{1}x_{2}+b\right)
\end{array}%
\right).
\]

\subsection{$X_0^*$}

The equilibrium $X_0^*$ has the associated characteristic equation:
\begin{equation}
(\lambda^2-2a\lambda+a^2+1)(\lambda+2b)=0,
\end{equation}

\noindent with eigenvalues $\lambda_{1,2}=a\pm i$ and $\lambda_3=-2b<0$. Therefore, $X_0^*$ is a \emph{repelling focus saddle} (see \cite{thei}).

The stability of the other four points $X_{i}^*, ~ i=1,...,4$, cannot be evaluated in general by analytical means; therefore, a numerical approach with symbolic computation was used to calculate and analyze the eigenvalues. For all values of $b\in(b_{min},b_{max})$, the characteristic equations corresponding to equilibria $X_{1-4}^*$ have a pair of complex conjugate roots and a real root, denoted by $\lambda_{1,2}\in \mathbb{C}$ and $\lambda_3\in \mathbb{R}$ respectively. Clearly, all equilibria will determine scrolling dynamics, reflecting a complicate and spectacular aspect of system obtained attractors.

The eigenvalues plotted in the $b$-parameter space are graphically illustrated in Fig. \ref{eigenus} (a) where, because of the mentioned symmetry, $X_{1,2}^*$ and $X_{3,4}^*$ have same eigenvalues, respectively, which are further discussed below.

\subsection{$X_{1,2}^*$}

Equilibria $X_{1,2}^*$ have a negative real eigenvalue $\lambda_3$ for every $b\in(b_{min},b_{max})$ (circle in Fig. \ref{eigenus} (a)). On the other hand, there exists a tiny interval $(b_1,b_2)=(1.05,1.67)$, where the real parts of the complex roots $\lambda_{1,2}$ (diamond in Fig. \ref{eigenus} (a)) are positive: $Real(\lambda_{1,2})>0$ (the red portion in region $D$), while for $b\in(b_{min},b_1)\cup (b_2,b_{max})$, $Real(\lambda_{1,2})<0$. Therefore, $X_{1,2}^*$ is a \emph{stable focus node (sink)} for $b\in(b_{min},b_1)\cup (b_2,b_{max})$, where all orbits starting in close neighborhoods, or attraction basin of $X_{1,2}^*$, will be attracted from all direction to this equilibria, and is a \emph{repelling focus saddle} for $b\in(b_1,b_2)$. In this case, all orbits, starting in close neighborhoods or the attraction basin of $X_{1,2}^*$, will scroll inside towards $X_{1,2}^*$ on the surface determined by the 2-dimensional stable manifold and, after some finite time, determined by the presence of the imaginary roots, then will be pushed away in the direction of the straight outflow determined by the underlying eigenvector, toward either a stable cycle or a chaotic attractor.\footnote{These results improve the coarse ones stabilized in \cite{danca1}, where the study was only for $b\in(0.13,1.3)$ and therefore only $b_1$ could be obtained (with a lower accuracy at that time: $b_1=1.025$)}
\subsection{$X_{3,4}^*$}
Because $\lambda_{3}>0$ and $Real(\lambda_{1,2})<0$ for all $b\in(b_{min},b_{max})$ (Fig. \ref{eigenus} (b)), $X_{3,4}^*$ are \emph{attracting saddles} for all $b\in(b_{min},b_{max})$, so all orbits, starting from close neighborhoods or the attraction basin of $X_{3,4}^*$, will be attracted by $X_{3,4}^*$ on the surface determined by the 2-dimensional stable manifold and, at some moment of time, they exit along the direction of the 1-dimensional unstable manifold.

The results are centralized in Fig. \ref{tabel}, where the iconic representations indicate the stability type (see \cite{thei}).

\begin{remark}\label{remus2}
\setlength{\itemindent}{.1in}
\item [i)]
Despite the relative simple evolution of eigenvalues of $X_{3,4}^*$ in the $b$-parameter space, these equilibria are generally responsible for the system dynamics, including here the heteroclinic orbit: all found numerical approximations of heteroclinic orbits start from $X_{3,4}^*$;
\item [ii)]For particular cases of $X_{1,2}^*$ with $b\approx b_{1,2}$, when $Real(\lambda_{1,2})=0$ and $\lambda_3<0$ (Fig. \ref{eigenus} (a)) and when the hyperbolicity of $X_{1,2}^*$ vanishes, some bifurcations and rich dynamics are possible, but such situations are not considered in this paper.
\end{remark}

The influence of attraction/repulsion of the five equilibrium points gives the richness of the dynamics of the RF system as shown in Fig. \ref{bifurcation}.

\section{Numerical approximation of heteroclinic orbits}\label{hetero}

A heteroclinic orbit, $\Gamma$, between two equilibria $X^*$ and $Y^*$ of a dynamical system $\dot x=f(x)$ is a trajectory that is backward asymptotic to $X^*$ and forward asymptotic to $Y^*$. This means that the heteroclinic orbit $\Gamma(t)$, solution of the underlying initial value problem, must verify
\[\label{adevar}
\begin{array}{c}
\underset{t\rightarrow \infty }{\lim }\Gamma (t)=Y^{\ast }, ~\underset{t\rightarrow -\infty }{\lim }\Gamma (t)=X^{\ast }.
\end{array}%
\]

\vspace{3mm}
\emph{In this paper, by hyperbolic orbit it refers to a Numerical Approximated Heteroclinic Orbit (NAHO) in the phase space of a path starting from a close neighborhood of a saddle ($X_{3,4}^*$ here) and connects the saddle with another saddle ($X_{1,2}^*$), or a stable cycle, or chaotic attractor, or connects two chaotic attractors.
}\vspace{3mm}

The usual way to determine analytically a heteroclinic orbit for (\ref{rf}) is to solve, for example, $x_3$ from the first equation and then putting it into the 2nd and 3rd equations, so as to get a nonlinear higher order ODE for $x_1$ and $x_2$. Next, one can try to use the method presented e.g. in \cite{unu}, \cite{doi}, or \cite{fang} to expand $x_2$ and $x_3$ in exponential series and compute recurrent relations for undetermined coefficients, after which the uniform convergence of the series solution must be proven (see \cite{wig} for a comprehensive review of results regarding homoclinic and heteroclinic motions in three and four dimensions).

Another way is to realize that all the above unstable equilibria have a 1-dimensional manifold, which is either stable or unstable. Therefore, one consider that the heteroclinic solution coincides in this case with these manifolds. Note that these manifolds are graphs of curves tangent to the corresponding eigenvalues. Moreover, these curves can be expressed as power series in one variable, which can be rather effectively computed from the differential equation (\ref{rf}).

On the other side, there exist relatively new numerical algorithms to find heteroclinic orbits (see, for example, \cite{conect1}). Also, by assuming that between two saddles there exists a connection, for example between the saddles $X_{3,4}^*$ and $X_{1,2}^*$, this must be one of the trajectories contained in $W^u(X_{3,4}^*)\cap W^s(X_{1,2}^*)$. These kind of intersections can also be numerically determined \cite{thei}.

However, since for this system, taking account on his complexity, to prove analytically the exisitence of heteroclinic and (or) homoclinic orbits, or to use one of the existing algorithms to determine precisely these orbits, really is a practically tedious task.

Therefore, we will take a semi-analytic approach with extensive numerical simulation supports.

Specifically encouraged by the accurate computational results and by the fact that the existing symmetry is a natural setting for the existence of heteroclinic orbits, we motivated to propose the following conjecture regarding the heteroclinic orbits of the RF system (for simplicity, we do not consider the case of homoclinic orbits here).

\begin{conj}The RF system admits NAHOs.
\end{conj}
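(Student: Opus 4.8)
The plan is to exploit the invariant-manifold structure established in Section~2. Since $X_{3,4}^*$ are attracting saddles carrying a one-dimensional unstable manifold $W^u(X_{3,4}^*)$ (tangent to the eigenvector associated with the real eigenvalue $\lambda_3>0$) together with a two-dimensional stable manifold, any nontrivial orbit leaving $X_{3,4}^*$ is forced onto $W^u(X_{3,4}^*)$, because every nearby direction transverse to it is attracting. A NAHO is therefore realized as the forward orbit on this single unstable branch, and the conjecture reduces to exhibiting this branch, computing it, and showing that it reaches a nontrivial $\omega$-limit set (a saddle $X_{1,2}^*$, a stable cycle, or a chaotic attractor).

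First I would construct $W^u(X_{3,4}^*)$ semi-analytically. Writing the manifold locally as a graph over the unstable eigendirection and substituting into (\ref{rf}), one obtains a recursion for the Taylor coefficients of the parametrizing curve, exactly along the lines indicated at the start of Section~\ref{hetero}: the leading term is fixed by the unstable eigenvector, and the higher coefficients are determined order by order. This yields an explicit, computable local approximation of the outgoing branch near $X_{3,4}^*$. The symmetry $T$ in (\ref{sim}) then supplies the twin branch emanating from the symmetric saddle with no further work, so it suffices to treat one of $X_3^*,X_4^*$.

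Second I would continue the branch globally by numerical integration: seed an initial condition at a small displacement $\varepsilon$ from $X_{3,4}^*$ along the unstable eigenvector (corrected by the local series so as to lie on $W^u$), integrate (\ref{rf}) forward with the fixed-step LIL predictor--corrector scheme (see Appendix and \cite{danca3}), and monitor the $\omega$-limit set. In the parameter windows where $X_{1,2}^*$ is a sink I would verify convergence to $X_{1,2}^*$; in windows producing a stable cycle or a chaotic attractor I would verify that the computed orbit is captured by the corresponding invariant set. Independence of the outcome from $\varepsilon$ and from the step-size, within tolerance, would serve as the accuracy certificate separating a genuine NAHO from a numerical artifact.

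The hard part will be the convergence and shadowing issue. When the target is a chaotic attractor there is no limiting equilibrium, so the strict heteroclinic condition $\lim_{t\to\infty}\Gamma(t)=Y^*$ cannot be asserted, which is exactly why the weaker NAHO notion is adopted. Even for an equilibrium target, rigorously proving uniform convergence of the manifold series and, more seriously, establishing that the floating-point orbit shadows a true solution of (\ref{rf}) is obstructed by the exponential error amplification characteristic of this strongly chaotic system. Accordingly the argument stays semi-analytic: the local series provides provable tangency and direction at $X_{3,4}^*$, while the global connection is supported numerically rather than proven, which is why the statement is left at the level of a conjecture.
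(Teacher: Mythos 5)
Your proposal matches the paper's own treatment: the conjecture is supported, not proved, by a semi-analytic strategy in which one starts in a small neighborhood of $X_{3,4}^*$ along the one-dimensional unstable direction, integrates forward with the fixed-step LIL scheme, identifies the $\omega$-limit set (a sink $X_{1,2}^*$, a stable cycle, or a chaotic attractor), and uses the symmetry $T$ to obtain the twin orbit. Your explicit Taylor-series parametrization of the local unstable manifold is exactly the route the paper sketches at the start of Section~\ref{hetero} before falling back on trial-and-error seeding, and your closing caveat about shadowing and the impossibility of asserting $\lim_{t\to\infty}\Gamma(t)=Y^*$ for a chaotic target is precisely why the statement is left as a conjecture there as well.
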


\noindent Regarding the equilibrium $X_0^*$, we can prove the following result.

\begin{prop}\label{th}
Equilibrium $X_0^*$ cannot have heteroclinic (homoclinic) orbits.
\end{prop}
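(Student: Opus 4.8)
The plan is to pin down the stable and unstable manifolds of $X_0^*$ as two explicit invariant sets of the flow, and then to observe that $X_0^*$ is the only equilibrium lying in either of them. First I would record that $X_0^*$ is hyperbolic: its eigenvalues $\lambda_{1,2}=a\pm i$ (with $a>0$) and $\lambda_3=-2b<0$ all have nonzero real part, so the stable/unstable manifold theorem applies, and any trajectory forward-asymptotic (resp. backward-asymptotic) to $X_0^*$ must lie on the global stable manifold $W^s(X_0^*)$ (resp. the unstable manifold $W^u(X_0^*)$).

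Next I would identify these manifolds explicitly. The coordinate plane $\{x_3=0\}$ is invariant for (\ref{rf}), since $\dot x_3=-2x_3(b+x_1x_2)$ vanishes identically there; it is two-dimensional and tangent at the origin to the eigenspace of $\lambda_{1,2}$, so by uniqueness of the local unstable manifold it agrees near $X_0^*$ with $W^u_{\mathrm{loc}}(X_0^*)$, and flowing forward inside the invariant plane yields $W^u(X_0^*)\subseteq\{x_3=0\}$. Dually, the $x_3$-axis $\{x_1=x_2=0\}$ is invariant, on it (\ref{rf}) reduces to $\dot x_3=-2bx_3$ so every point converges to the origin, and since the axis is one-dimensional and tangent to the eigenspace of $\lambda_3$ we obtain $W^s(X_0^*)=\{x_1=x_2=0\}$.

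I would then verify that $X_0^*$ is the unique equilibrium in each set. On the axis, $x_1=x_2=0$ forces $0=\dot x_3=-2bx_3$, hence $x_3=0$. In the plane, setting $x_3=0$ the second equation gives $x_2=x_1(x_1^2-1)/a$, and substituting into the first yields $\tfrac{x_1}{a}\big[(x_1^2-1)^2+a^2\big]=0$; as $a\neq 0$ the bracket is strictly positive, so $x_1=x_2=0$. Thus the origin is the only equilibrium both on the $x_3$-axis and in the plane $\{x_3=0\}$.

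The conclusion then assembles quickly. A heteroclinic orbit backward-asymptotic to $X_0^*$ lies in $W^u(X_0^*)\subseteq\{x_3=0\}$, and, being trapped in this invariant plane, its forward limit can only be an equilibrium of the plane, i.e. $X_0^*$ itself; symmetrically, an orbit forward-asymptotic to $X_0^*$ lies on the invariant $x_3$-axis and can only issue from $X_0^*$. Hence any orbit having $X_0^*$ at one end has $X_0^*$ at the other as well, making it homoclinic; but such an orbit would have to lie in $W^u(X_0^*)\cap W^s(X_0^*)\subseteq\{x_3=0\}\cap\{x_1=x_2=0\}=\{X_0^*\}$, which carries no nonconstant trajectory. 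The step I would treat most carefully is this identification of the global manifolds with the whole plane and the whole axis, since it relies on combining hyperbolicity, uniqueness of the local invariant manifolds, and the invariance of both sets; once that is secured, everything else is immediate.
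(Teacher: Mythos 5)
Your proof is correct and follows essentially the same route as the paper: both arguments identify $W^u(X_0^*)$ with the invariant plane $\{x_3=0\}$ and $W^s(X_0^*)$ with the invariant $x_3$-axis (via invariance plus uniqueness of the local invariant manifolds), and then rule out any heteroclinic or homoclinic connection lying in either set. The only minor difference is in how the planar case is closed: the paper uses the identity $\frac{d}{dt}(x_1^2+x_2^2)=2a(x_1^2+x_2^2)$ to conclude that any nonconstant orbit in $\{x_3=0\}$ is unbounded and hence cannot converge to an equilibrium, whereas you instead verify directly that the origin is the only equilibrium in that plane --- both are valid ways to finish.
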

\begin{proof}
By taking $x_3=0$ in \eqref{rf}, the planar reduced system is
 \begin{equation}
\label{rfred}
\begin{array}{l}
\overset{.}{x}_{1}=x_{2}\left(-1+x_{1}^{2}\right) +ax_{1}, \\
\overset{.}{x}_{2}=x_{1}\left(+1-x_{1}^{2}\right) +ax_{2}, \\
\end{array}%
\end{equation}

\noindent which satisfies

\begin{equation}\label{red}
\frac{d}{dt}(x_1^2+x_2^2)=2a(x_1^2+x_2^2).
\end{equation}

\noindent So, for $a>0$, $X_0^*$ is a global attractor for reduced system on $x_3=0$. Therefore, the origin $(0,0)$ is globally asymptotically unstable for (\ref{rfred}). Furthermore, the line $x_1=x_2=0$, i.e. the $x_3$-axis, is also invariant with the reduced equation

\begin{equation}
\dot{x_3}=-2bx_3,
\end{equation}

\noindent which has a solution $x_3(t)=e^{-2bt}x_2(0)$. Therefore, $X_0^*$ is attracting on $x_3$-axis and the origin $0$ of system (\ref{red}) is also globally asymptotically stable. This corresponds to the fact that $X_0$ is hyperbolic with a repelling saddle (having the 2-dimensional unstable $W_{X_0^*}^u$ and the 1-dimensional stable manifold $W_{X_0^*}^s$). Also, $W_{X_0^*}^u=\{x_3=0\}$ and $W_{X_0^*}^s=\{x_1=x_2=0\}$. This follows from the uniqueness of these invariant manifolds.

Next, if there would be a heteroclinic connection at $X_0^*$, then it would be lying in the stable or unstable manifolds of $X_0^*$. In the first case, it should be the $x_3$-axis, which is not a heteroclinic solution, however. In the second case, if it is the plane $x_3=0$, then it is an unbounded solution. So, either case, one has a contradiction. Consequently, $X_0^*$ has no heteroclinic connection.
\end{proof}

As mentioned before, NAHOs have been found by trial-and-error numerically, namely by searching adequate values for $b$ such that the orbits start as close as possible (in small neighborhoods) to $X_{3,4}^*$.

In the following, denote the coordinates of equilibria as $X_1^*(x_{i1}^*,x_{i2}^*,x_{i3}^*)$, $i=0,...,4$, and the NAHO as $\Gamma(x_1,x_2,x_3)$.

We present next the four main cases of numerical heteroclinic orbits approximations we found: $X_{3,4}^*\rightarrow X_{1,2}^*$, $X_{3,4}^*\rightarrow stable~ cycle$, $X_{3,4}^*\rightarrow chaotic~attractor$ and \emph{chaos} $\rightarrow$ \emph{chaos} and an interesting case of NAHOs connecting $X_{3,4}^*$ with two saddles-like. Due to the mentioned symmetries, we only consider the case of orbits starting from $X_3^*$, since the case of orbits starting from $X_4^*$ is similar.

\subsection{$b=0.288$}
The first NAHO was obtained for $b=0.288$, which connects $X_3^*$ to $X_1^*$ (Figs. \ref{asta-iatreia} (a), (b)). As the phase plots and their projections indicated (Figs. \ref{asta-iatreia} (c)-(e)), once the orbit enters by scrolling into a neighborhood of $X_3^*$, close to its 2-dimensional stable manifold, thereby and approaches by rotating around  $X_1^*$, because of the focus node type of stability of $X_1^*$ (see also Figs. \ref{asta-iatreia} (f)-(h)).
\subsection{$b=1.24$}
The second NAHO corresponds to $b=1.24$ (Figs. \ref{proba2} (a),(b)). Notice that although there is a connection between $X_0^*$ and $X_3^*$, denoted $\tilde{\Gamma}$, it is not a heteroclinic connection in virtue of Property \ref{th}. Another possible explanation, as why in this case the connection is not heteroclinic, is that the unstable manifolds $x_3$ is numerically so. In the third equation of (\ref{rf}), when $b+x_1x_2>0$, $x_3$-axis is attracting, but when $b+x_1x_2<0$ it is repelling.

Therefore, we have again a single NAHO connection between $X_3^*$ and a stable cycle $SC$. After some transients $T$, $SC$ is generated due to the lost of the stability of $X_3^*$ for this value of $b$ (Figs. \ref{proba2} (c)-(e)). Time series in Fig. \ref{proba2} (f)-(h) unveil this connection and also the stability of the cycle $SC$, while the histograms (Figs. \ref{proba2} (i)-(k)) indicate the $SC$'s periodic character.

Another interesting characteristic of the RF system is that the speed along its orbit varies significantly, especially along the $x_1$ component (see zoomed detail in the time series of $x_1$ in Fig. \ref{proba2} (f), and the peek in the underlying histogram in Fig. \ref{proba2} (i)). Thus, when the component $x_1$ of the orbit, $\Gamma(x_1)$, joints the neighborhood of the component $x_{11}^*$ of $X_1^*$, it remains for a relatively long time in this neighborhood. The speed along the $x_1$-axis, when the system orbit approaches the component $x_{11}^*$, is very small, fact revealed by zoomed detail in Fig. \ref{proba2} (f) and the related histogram. Also, the strong oscillations of the $x_3$ component are underlined by its time series (Fig. \ref{proba2} (h)), the related histogram (Fig. \ref{proba2} (k)), and also the tubular representation in  Fig. \ref{spictac} (a), where the varying speed on a stable cycle is indicated by colors. There, for this heteroclinic orbit, the orbit speeds in neighborhood of the equilibria $X_{1,2}^*$ are higher.

In this case the approximation is coarser since the NAHO exits in a larger neighborhood of $X_3^*$.

\subsection{$b=1.2128$}
For $b=1.2128$, there exists a connection between $X_3^*$ and, this time, a chaotic attractor born from the previous stable cycle $SC$ which, via a cascade of bifurcations (see the bifurcation diagram in Fig. \ref{bifurcation}) lost its stability (Fig. \ref{proba3}). As in the previous case, the orbit connecting $X_0^*$ should not be considered a NAHO. Again, the time wasted by the component $x_1$ of $\Gamma$, $\Gamma(x_1)$, in the neighborhood of $x_{11}^*$, is longer and can be identified from the time series (Fig. \ref{proba3} (f)) and the histogram (Fig. \ref{proba3} (i)).


\subsection{$b=1.2128$ (different initial conditions)}
Fig. \ref{hetero_x} presents another NAHO connection ($\Gamma_1$ and $\Gamma_2$) which now, from different initial conditions than in the above case, link two chaotic attractors. This case is interesting since it resembles the \emph{cycling chaos} (see e.g. \cite{cycl1,cycl2}. As is well known, saddle connections between equilibria can appear in systems with symmetries, and these connections can be cycling like in this case. Therefore, in this case the system is said to have a NAHO cycle).

\subsection{$b=1.8$}

Fig. \ref{crazy_bun} presents one of the most interesting cases for this system, which reveals two new scrolls, denoted by $Y_{1,2}^*$. Since there are only five equilibria, $Y_{1,2}^*$ cannot be equilibria but only reflection-like equilibria $X_{3,4}^*$. Therefore, taking into account the shape of the trajectories around, one may consider $Y_{1,2}^*$ as a kind of \emph{virtual saddles}.

The connections (curves $\tilde{\Gamma}_{1,2}$) between $X_{3,4}$ and $Y_{1,2}^*$ could be considered as NAHOs.

Notice that $Y_{1,2}^*$ could be clear obtained with the \emph{LIL} method (see \cite{danca3}), which yielded the most accurate numerical result (Fig. \ref{crazy_bun} (a)), while with the RK4 method, a less-accurate result was obtained (Fig. \ref{crazy_bun} (b)). The ode23 (Matlab solver) gave only the paths to $Y_{1,2}^*$, after which, for a longer time of integration, the trajectories diverge (Fig. \ref{crazy_bun} (c)).

As can be seen, while the scrolls around $X_3^*$ and $X_4^*$ are contrary to each other, the scrolls around $X_3^*$ and $Y_1^*$ and around $X_4^*$ and $Y_2^*$, respectively, take place in the same manner.

Another characteristic of $Y_{1,2}^*$ is that the system is unstable after a relatively short time ($T_{max}\approx 67$), and this could be one of reasons for other numerical integration routines to fail.


\begin{remark}
Possible connections $X_{1,2} \rightarrow X_{3,4}$ have not been found.
\end{remark}

\section{Chaotic attractors}\label{coe}

In this section, we present different kinds of chaotic attractors of the RF system.

\subsection{Chaotic attractors}
\noindent The system has several different chaotic attractors with different shapes (Fig. \ref{haosuri}). Also, having five equilibria, the RF system is topologically non-equivalent to many classical systems, such as the Lorenz and Chen systems (with three equilibria), R\"{o}ssler system (with two equilibria), some Sprott systems (with one equilibrium \cite{spr}), and so on.

Notice that, in general, the existence of chaos not necessarily implies the existence of heteroclinic orbits. However, in this case, it seems that this is possible (see the NAHO in Fig. \ref{hetero_x}).

As the bifurcation diagram and the zoomed detail $D$ indicates (Fig. \ref{bifurcation} (a) and (b)), there are intervals of $b$ with which chaos is possible to exist (Figs. \ref{haosuri} (a)-(f)).

While the first four chaotic attractors (Figs. \ref{haosuri} (a)-(d)) were obtained for $b\in(b_{min},b_{max})$ and $a=0.1$, the last chaotic attractor in Fig. \ref{haosuri} (e) was obtained for a no-physical meaning set of parameter values: $a=-1$ and $b=-0.1$.

\begin{remark}
\setlength{\itemindent}{.1in}
\item
The scrolling in the regions close to equilibria $X_{1,2}^*$, for all studied cases of either regular motion or chaotic motion, is the same as indicated in Fig. \ref{noua}.
\item Considering that there exist heteroclinic orbits, as \v{S}i`lnikov's Theorem requires \cite{sil1,sil2}, and by applying numrically \v{S}i`lnikov's criterion for $b\in(b_{min},b_{max})$, we predict that the RF system would have Smale horse-shoe type chaos for $b\in(0.13,0.199)$. On the other hand, the bifurcation diagram and numerically tests indicate that there exist no chaotic motion and NAHOs in this region of parameter $b$.\footnote{This fact underlines the importance of assumptive existence of heteroclinic connections required by Silnikov's theorem, to prove chaotic motion in the sense of Smale horse-shoe.}
\end{remark}

\subsection{Transient chaos }
Another interesting observation of an apparent chaotic behavior, appearing when $b=0.279$, as indicated by the bifurcation diagram (see Fig. \ref{bifurcation} (b)), should enhance the chaotic motion. However, as the phase plot in Fig. \ref{interesant} shows, the transients are considerably long and, therefore, one can consider it as \emph{transient chaos} (see e.g. \cite{lai}). Finally, the system seems to have ``self-control'', thereby destroying the chaotic behavior. The phenomenon in this case seems to be linked to the less visible chaotic window starting at $b=b^*$ ($D_1$ in Fig. \ref{bifurcation} (b)), to which $b=0.279$ belongs. It at first sight indicates the coexistence of a chaotic attractor and the stable focus node $X_{1,2}$ which, after a sufficiently long time, is destroyed by the stronger influence of the stability of $X_{1,2}$. This situation underlines our presumed complexity of the inter woven fractal basin boundary structure of the RF system.
 \subsection{Coexisting attractors}

 \noindent Several equilibrium states (attractors) may coexist for a given set of system parameters \cite{four,spr2}. This phenomenon, one of most exciting in nonlinear dynamics, is referred to \emph{multistability} and has been found in almost all research areas of natural science, such as mechanics, optics, electronics, environmental science and neuroscience. Multistable systems are characterized by a high degree of complexity in behavior due to the “interaction” among different attractors \cite{multi}. In these cases, the qualitative behavior of the system might change dramatically under the variation of the system parameters, as in the RF system.

Fig. \ref{coexist} presents the following coexistences: \emph{chaotic~ attractors\textemdash sinks }$X_{1,2}^*$ (Figs. \ref{coexist} (a)-(c) for $b=0.277,~b=0.2876,~ b=0.98$, respectively), \emph{stable cycle\textemdash stable cycle} (Fig. \ref{coexist} (d) for $b=1.08$) and \emph{stable cycles\textemdash attractive point} (Fig. \ref{coexist} (e) for $b=1.035$).

\vspace{3mm}
\emph{To note that we did not found coexisting stable cycle\textemdash chaotic attractors.}
\vspace{3mm}

\subsection{Hidden attractors}
As defined in \cite{fish}, an attractor is called a \emph{hidden attractor }if its basin of attraction does not intersect with small neighborhoods of equilibria (see also \cite{kuzne2,kuzne} for details about hidden attractors). Therefore, to find hidden attractors of the RF system, we have to check numerically by choosing the initial points on the unstable manifolds, in small vicinity of the equilibria, and integrating the system whether we can see the obtained trajectories are attracted to the chaotic attractor.

\indent Here, consider the case of $a=0.1, b=0.2715$, when there is a chaotic attractor besides the stable equilibria $X_{1,2}^*$ (see Fig. \ref{tabel}).

In Fig. \ref{RF-3D}, the chaotic attractor (black) does not attract the two-dimensional unstable manifolds $W_{X_0^*}^u$ of $X_0^*$, since, as can be seen in the figure, the planar curves, lying in $x_3=0$ (see the stability of $X_0^*$ discussed in Section 2), with initial conditions in $W_{X_0^*}^u$ (blue) ``grows'' as the time of integration increases which does not intersect the chaotic attractor basin of attraction. Also, the separatrices of $X_{3,4}^*$ (red) turn to infinity and also do not intersect the attractor basin of attraction. This numerical analysis gives us a very good reason to say (but very carefully, taking into account all the difficulties arising from the numerical investigation of this system) that the chaotic attractors obtained in system (\ref{rf}) may be hidden. Several other cases (such as $b = 0.2876$, $b = 0.98$) gave similar results (see also Figs. \ref{coexist} (a)-(c) and Fig. \ref{interesant}).

As mentioned in \cite{kuzne}, the existence of hidden attractors, in the present RF could be a consequence of attractors coexistence or system multistability.


\section*{Conclusion}

In this paper, we have revisited the Rabinovich-Fabrikant system via careful and accurate numerical analysis, to unveil some new and interesting dynamical properties and behaviors of the system. In addition to the improvements of some previous results presented in \cite{danca1}, we found new numerical approximations of heteroclinic orbits, under the assumption that these orbits exist. Beside the coexistence between several types of attractors, cycling chaos, hidden attractors, transient chaos, we also found numerically that this system could present a different kind of saddle-like, which can be unveiled only by fixed step-size predictor-corrector LIL method (Appendix). Both the RK4 method and LIL method were utilized.
In the future, beyond the semi-analytic or numerical approaches taken in this paper, which seems to be the only option today, it would be nice and useful to have more rigorous analytic methodologies and approaches for this kind of investigations.

\section*{Appendix}

\noindent \textbf{Multistep Predictor Corrector Local Iterative Linearization (LIL) Method}

\noindent Consider the following initial value problem

\begin{equation}
\dot{x}=f(t,x),~~ x(t_0)=x_0,
\end{equation}

\noindent where $f:[t_0,T]$ with $T>0$, $t_0\in \mathbb{R}_+$, is a $\mathbb{C}^m$-smooth Lipschitz function.

The $m$-step ($m=3$ this paper) predictor-corrector Local Iterative Linearization (LIL) method is defined as

\begin{equation}\label{lil}
x_k=\frac{5}{3}x_{k-1}-\frac{13}{15}x_{k-2}+\frac{1}{5}x_{k-3}+\frac{h}{45}[26f_k-5f_{k-1}+4f_{k-2}-f_{k-3}],
\end{equation}

\noindent where $f_k=f(t_k, x_k)$ for $k=0,2,...$. Since the method is implicit, the corrector form (\ref{lil}), requires a predictor determination of $x_k$ (appearing in $f_k$). By the LIL method

\begin{equation}
x_{k}=3x_{k-3}-3x_{k-2}+x_{k-1}.
\end{equation}

Also, as a predictor corrector formula, the LIL scheme requires a starting fixed-$m$-step-size method to generate the necessary initial steps ($x_{-1}, x_{-2},$ and $x_{-3}$ for $m=3$). In this paper, the first three steps were generated by the RK4 method, which was used before LIL is started.

The convergency, time stability, comparison with other standard methods, formulas for several $m$ and some applications of the $LIL$ scheme can be found in \cite{danca3}.

\vspace{3mm}

\noindent \textbf{Acknowledgment}
MF is supported by the Grants VEGA-MS 1/0071/14, VEGA-SAV 2/0029/13 and by the Slovak Research and Development Agency under the contract  No. APVV-14-0378. NK thanks Russian Scientific Foundation (project 14-21-00041).  GC appreciates the GRF Grant CityU11208515 by the Hong Kong Research Grants Council.

\newpage{\pagestyle{empty}\cleardoublepage}

\pagebreak

\begin{figure}[t!]
\begin{center}
\includegraphics[clip,width=0.85\textwidth]{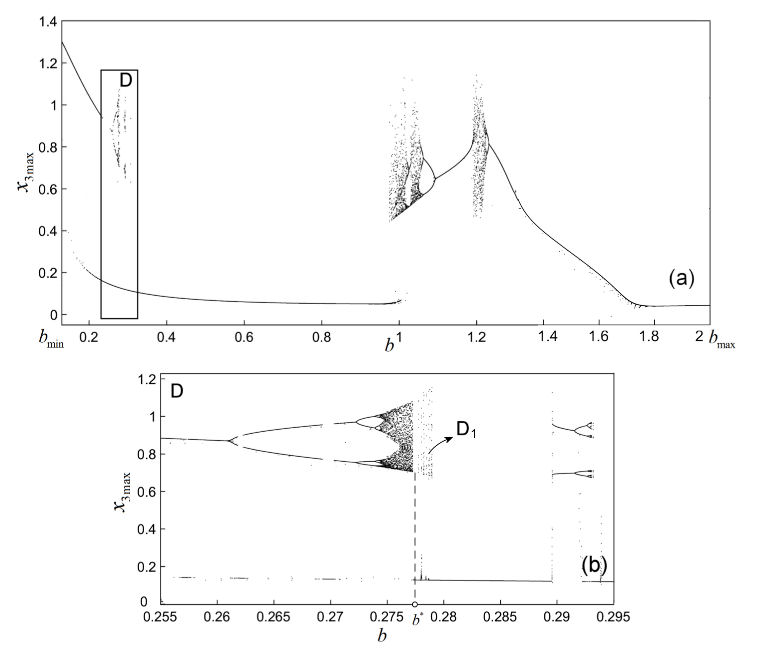}
\caption{a) Bifurcation diagram of $x_3$ of the RF system (local maxima are plotted). (b) Zoomed (rescaled) detail $D$.}
\label{bifurcation}
\end{center}
\end{figure}

\begin{figure}[t!]
\begin{center}
\includegraphics[clip,width=0.8\textwidth]{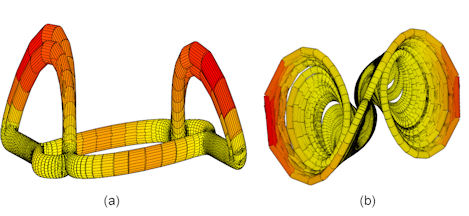}
\caption{Tubular representations of: (a) Stable cycle obtained with $a=0.1$ and $b=1.035$. (b) Chaotic attractor obtained with $a=-1$ and $b=-0.1$.}
\label{spictac}
\end{center}
\end{figure}

\begin{figure}[t!]
\begin{center}
\includegraphics[clip,width=0.9\textwidth]{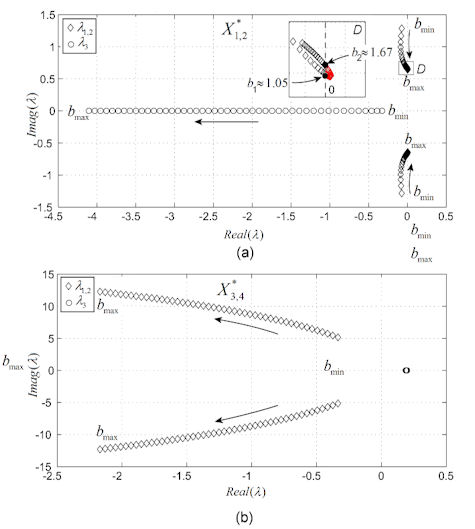}
\caption{Eigenvalues of the RF system in the $b$-parameter space. (a) Eigenvalues of equilibria $X_{1,2}^*$. (b) Eigenvalues of equilibria $X_{3,4}^*$.}
\label{eigenus}
\end{center}
\end{figure}

\begin{figure}[t!]
\begin{center}
\includegraphics[clip,width=0.75 \textwidth]{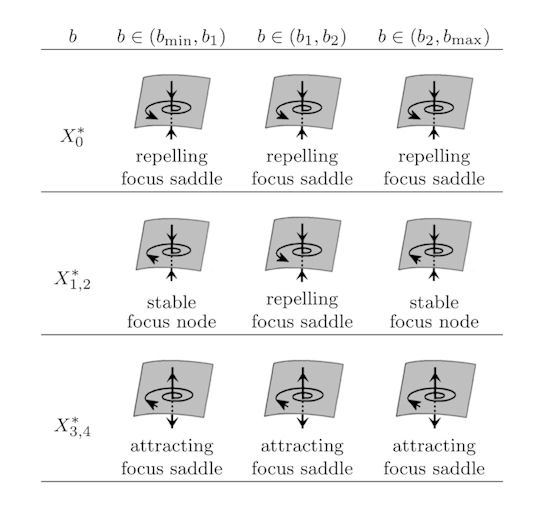}
\caption{Equilibria of the RF system (\ref{rf}). $b_{min}=0.13$, $b_1=1.05$, $b_2=1.67$ and $b_{max=2}$.}
\label{tabel}
\end{center}
\end{figure}

\begin{figure}[t!]
\begin{center}
\includegraphics[clip,width=1\textwidth]{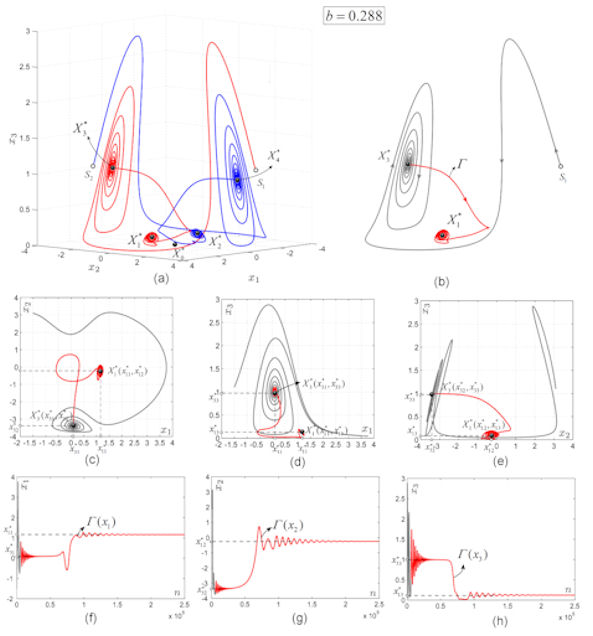}
\caption{NAHO for $b=0.288$. (a) Phase plots of the two symmetrical NAHOs. (b) NAHO connecting $X^*_3$ with $X_1^*$. (c)-(d) Plane projections. (f)-(h) Time series.}
\label{asta-iatreia}
\end{center}
\end{figure}

\begin{figure}[t!]
\begin{center}
\includegraphics[clip,width=0.8\textwidth]{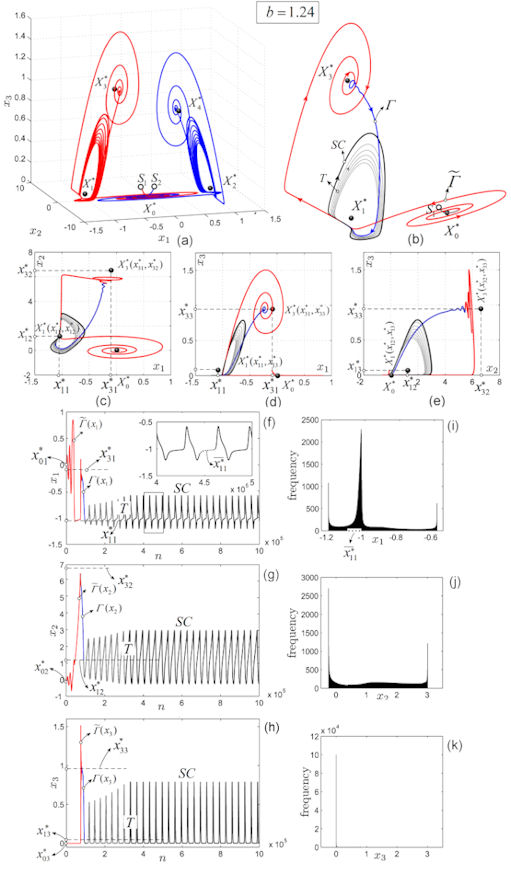}
\caption{NAHO for $b=1.24$. (a) Phase plots of the two symmetrical NAHOs. (b) NAHO connecting $X^*_3$ with the stable cycle $SC$. (c)-(d) Plane projections. (f)-(h) Time series. (i)-(k) Histograms.}
\label{proba2}
\end{center}
\end{figure}

\begin{figure}[t!]
\begin{center}
\includegraphics[clip,width=0.8\textwidth]{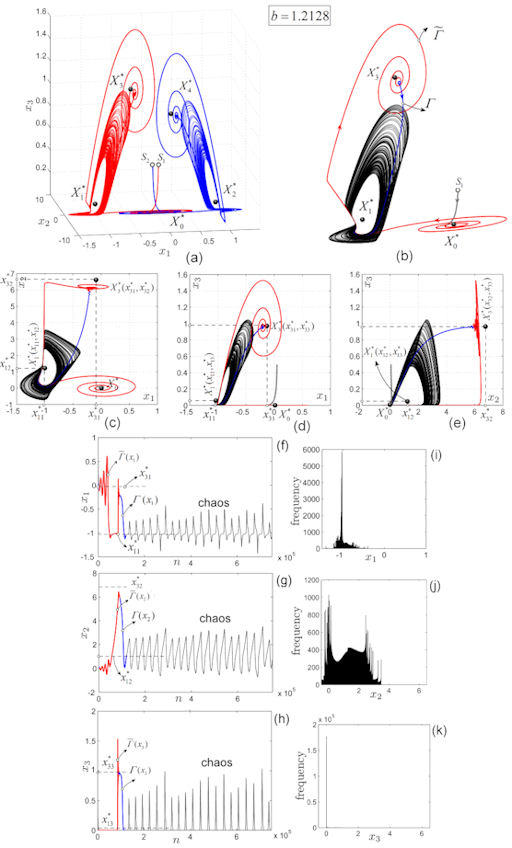}
\caption{NAHO for $b=1.2128$. (a) Phase plots of the two symmetrical NAHOs. (b) NAHO connecting $X^*_3$ to a chaotic attractor. (c)-(d) Plane projections. (f)-(h) Time series. (i)-(k) Histograms.}
\label{proba3}
\end{center}
\end{figure}

\begin{figure}[t!]
\begin{center}
\includegraphics[clip,width=0.9\textwidth]{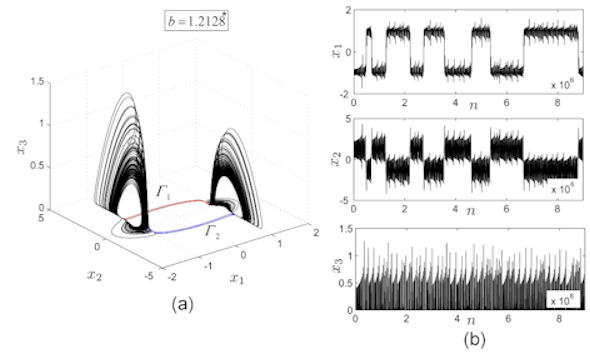}
\caption{Two NAHOs (cycling chaos) for $b=1.2128$, but with different initial conditions compared to the case in Fig. \ref{proba3}. (a) Phase plot. (b) Time series.}
\label{hetero_x}
\end{center}
\end{figure}

\begin{figure}[t!]
\begin{center}
\includegraphics[clip,width=0.95 \textwidth]{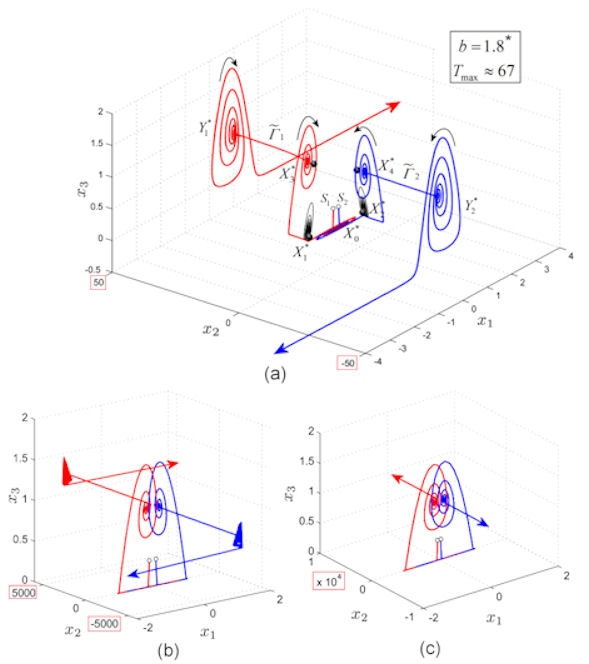}
\caption{Virtual saddles $Y_{1,2}^*$. (a) $Y_{1,2}^*$ obtained with the LIL scheme. (b) $Y_{1,2}^*$ obtained with the RK4 method. (c) $Y_{1,2}^*$ obtained with the ode23 Matlab solver (the arrows indicate the divergence).}
\label{crazy_bun}
\end{center}
\end{figure}

\begin{figure}[t!]
\begin{center}
\includegraphics[clip,width=0.9\textwidth]{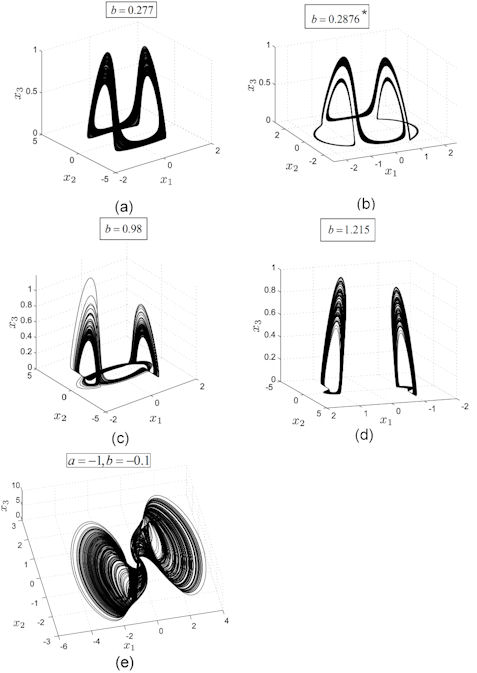}
\caption{Different chaotic attractors. (a) $b=0.277$. (b) $b=0.2876$. (c) $b=0.98$. (d) $b=1.215$. (e) $a=-1,~ b=-0.1$.}
\label{haosuri}
\end{center}
\end{figure}

\begin{figure}[t!]
\begin{center}
\includegraphics[clip,width=0.9\textwidth]{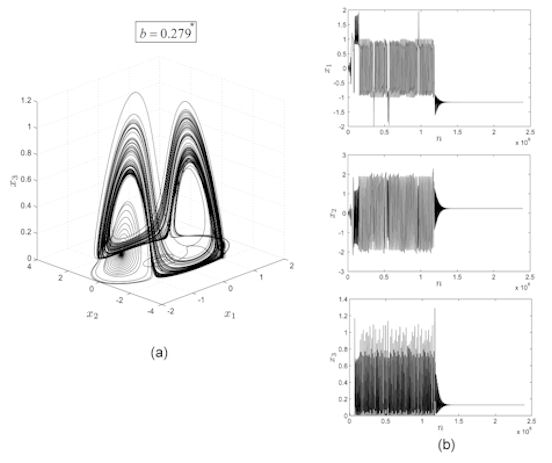}
\caption{Transient chaos for $b=0.279$.}
\label{interesant}
\end{center}
\end{figure}

\begin{figure}[t!]
\begin{center}
\includegraphics[clip,width=0.6\textwidth]{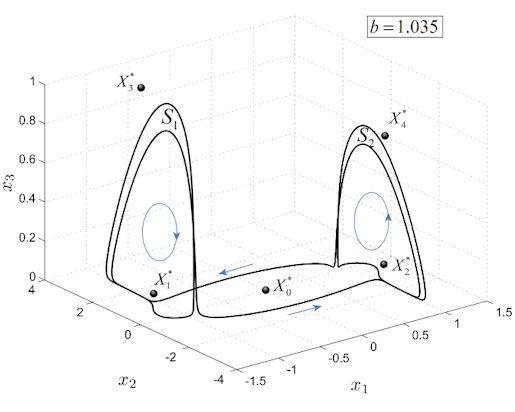}
\caption{A typical stable cycle of the RF system for $a=0.1$ and $b=1.035$. }
\label{noua}
\end{center}
\end{figure}

\begin{figure}[t!]
\begin{center}
\includegraphics[clip,width=0.8\textwidth]{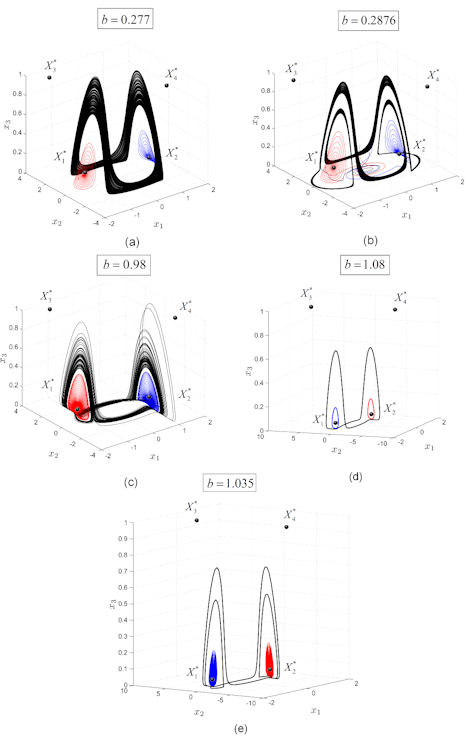}
\caption{Coexisting attractors. (a)-(c) Chaotic attractors with the two stable fixed points $X_{1,2}^*$ for $b=0.277$, $b=0.2876$ and $b=0.98$ respectively. (d) Coexisting three stable cycles for $b=1.08$. (e) Coexisting stable cycle with the two stable fixed points $X_{1,2}^*$ for $b=1.035$.}
\label{coexist}
\end{center}
\end{figure}

\begin{figure}[t!]
\begin{center}
\includegraphics[clip,width=0.75 \textwidth]{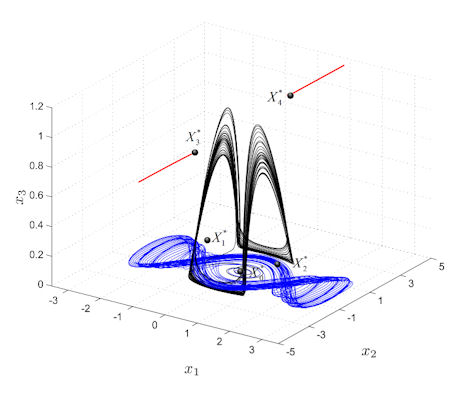}
\caption{A possible hidden attractor for $a=0.1$ and $b=0.2715$. Black: possible hidden attractor; Red: separatrices of $X_{3,4}^*$; Blue: planar trajectory with initial condition on the two-dimensional unstable manifold $W_{X_0^*}^u$ of $X_0^*$.}
\label{RF-3D}
\end{center}
\end{figure}

\end{document}